\def\MatrixFont{\bf}
\def\VectorFont{\bf}
\newcommand{\RNum}[1]{\uppercase\expandafter{\romannumeral #1\relax}}
\newcommand{\mA}{{\MatrixFont A}}
\newcommand{\mB}{{\MatrixFont B}}
\newcommand{\mC}{{\MatrixFont C}}
\newcommand{\mG}{{\MatrixFont G}}
\newcommand{\mI}{{\MatrixFont I}}
\newcommand{\mM}{{\MatrixFont M}}
\newcommand{\mR}{{\MatrixFont R}}
\newcommand{\mZ}{{\MatrixFont Z}}
\newcommand{\vx}{{\VectorFont x}}
\newcommand{\vy}{{\VectorFont y}}
\newtheorem{theorem}{Theorem}
\newtheorem{remark}{Remark}
\newtheorem{proposition}{Proposition}
\newtheorem{assumption}{Assumption}
\begin{document}
\title{An Unbiased Symmetric Matrix Estimator for Topology Inference under Partial Observability}

\author{Yupeng Chen\\
	College of Mathematics\\ 
	Sichuan University\\
	Chengdu, Sichuan 610064, China\\
	\texttt{chenyupeng@stu.scu.edu.cn}, \And  
	Zhiguo Wang\thanks{
		Corresponding
		author: Zhiguo Wang. The work of Zhiguo Wang is supported by the Fundamental Research Funds for the Central Universities.}\\
	College of Mathematics\\ 
	Sichuan University\\
	Chengdu, Sichuan 610064, China\\
	\texttt{wangzhiguo@scu.edu.cn}\And
	Xiaojing Shen\\
	College of Mathematics
	\\ Sichuan University
	\\Chengdu, Sichuan 610064, China\\
	\texttt{shenxj@scu.edu.cn}\\}

\maketitle
\begin{abstract}

Network topology inference is a fundamental problem in many applications of network science, such as locating the source of fake news and brain connectivity network detection. Many real-world situations suffer from a critical problem in which only a limited number of observed nodes are available. In this work, the problem of network topology inference under the framework of partial observability is considered.
Based on the vector autoregression model, we propose a novel unbiased estimator for symmetric network topology with Gaussian noise and the Laplacian combination rule. Theoretically, we prove that this estimator converges in probability to the network combination matrix. Furthermore, by utilizing the Gaussian mixture model algorithm, an effective algorithm called the network inference Gauss algorithm is developed to infer the network structure. Finally, compared with state-of-the-art methods, numerical experiments demonstrate that better performance is obtained in the case of small sample sizes when using the proposed algorithm.

\end{abstract}

\keywords{
Topology inference, symmetric matrix, unbiased estimator, partial observation.}

\section{Introduction}

In network topology inference, the goal is to identify the network structure from the collected signals, which plays a vital role in many applications \cite{dong2019learning,yang2020network,mateos2019connecting}. In most existing research \cite{vosoughi2020large,zhang2014topology,sharma2019communication}, the network topology was inferred based on a complete set of observed data for all the entities of interest.

However, many real-world applications suffer from
the partial observability problem. For example, when investigating a social network with millions of members, only a limited number of observed nodes are available \cite{matta2020graph,wai2019community}. Thus, it is necessary to study the network topology inference based on partial observations, (see Fig. \ref{fig1}). Since the inference performance is affected by the missing observations \cite{omidshafiei2017deep,segarra2017network,santos2018consistent,santos2020local}, network topology
inference is challenging when the signals of the hidden nodes are not observed.

Several works formulated partially observed network topology inference as sparse or convex optimization problems \cite{chandrasekaran2010latent,buciulea2019network}. Recently, the problem of general network topology inference using Erd\H{o}s-R\'{e}nyi
(ER) models \cite{erdHos1959random} under partial observability was considered in \cite{matta2018consistent}. Moreover, some attractive estimators, such as the Granger, one-lag, and residual estimators, were proposed in \cite{matta2018consistent,matta2019graph}. Furthermore, it was proven that these estimators can be used to recover the subgraph of observed network nodes when the network size grows.

Inspired by the estimators proposed in \cite{matta2019graph},
we propose an efficient unbiased symmetric matrix estimator to infer the network topology under partial observability.
Compared with the work in \cite{matta2019graph}, we focus on the case in which the network is undirected and therefore the associated adjacency matrix is symmetric, and the network has a fixed size.
The major technical contributions of this work are summarized as follows.
\begin{itemize}
	\item An unbiased matrix estimator for symmetric combination matrices is proposed when incomplete graph signals are produced by the vector autoregression (VAR) model. Moreover, under the settings of Gaussian noise and Laplacian combination rule, we prove that the unbiased estimator converges to the network combination matrix in probability as the number of samples increases.
\item Experiments show the entries of
the unbiased matrix estimator follow a Gaussian mixture distribution. Thus, by exploiting this discovery, a node pair clustering based on the Gaussian mixture model (GMM) algorithm rather than the K-means algorithm \cite{hartigan1979algorithm} is used.
	\item  The proposed unbiased estimator is effective in obtaining a better initialization for the GMM. Combining the unbiased estimator with the node pair clustering, we propose an efficient network inference method called the network inference Gauss (NIG) algorithm. Simulations show that the NIG algorithm performs better than the current methods in the case of small sample sizes.
\end{itemize}
\begin{figure}[h]
	\centering
	\includegraphics[width=0.6\linewidth]{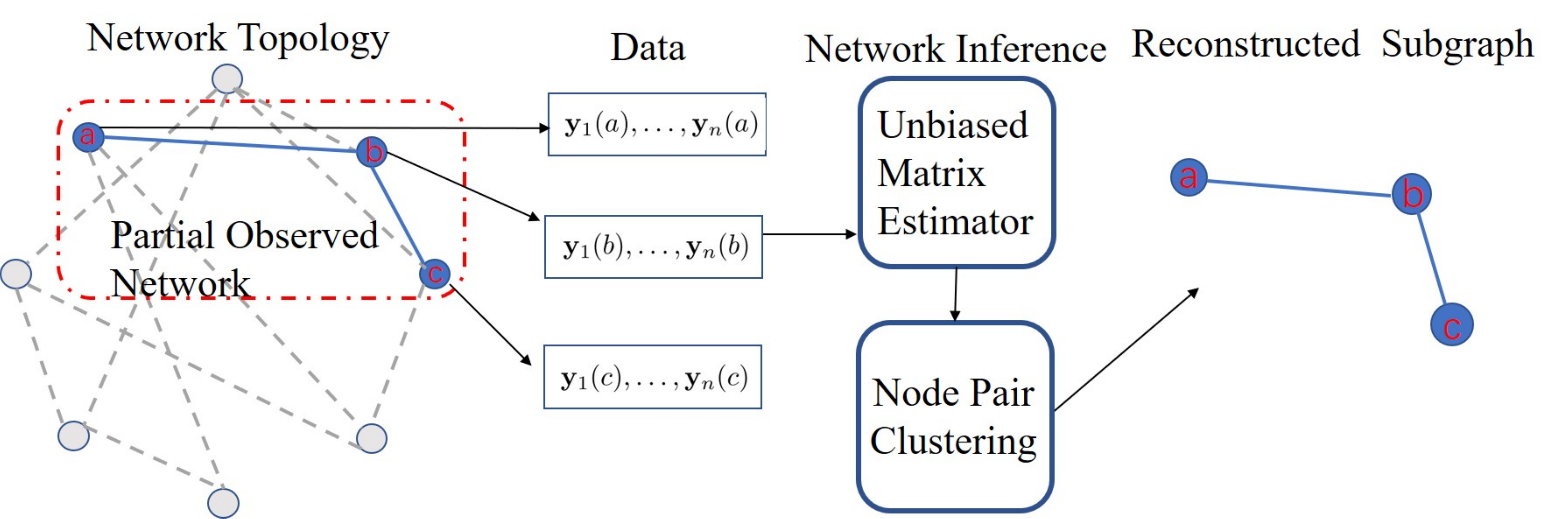}
	\caption{Illustration of the network inference under partial observability.}
	\label{fig1}
\end{figure}

\textbf{Notations:} The total number of network nodes is denoted by $N$, and the number of samples on each node is $n$.
For an $N\times N$ matrix $\mZ$, the submatrix spanning the rows and columns of $\mZ$ indexed by set $S\subset\{1,\dots,N\}$ is denoted by $\mZ_{S}.$ The fraction of the observed nodes is denoted as $\xi = \frac{|S|}{N},$ where $|S|$ represents the cardinality of the set $S$. Finally, $\mI$ denotes the identity matrix.

\section{Problem Formulation}
We consider an undirected
random network, such as ER and Barabási–Albert (BA) random graphs, with $N$ nodes.

At time $t$, each node $k$ collects the output signal $\vy_t(k)$ according to the following network diffusion process (i.e., running a VAR model) \cite{matta2020graph}
\begin{align}\label{va1}
	y_{t+1}(k)=\sum_{j=1}^{N}A_{kj}y_{t}(j)+\mu x_{t+1}(k),
\end{align}
where $x_{t}(k)$ is the input signal, $\mu\in(0,1)$ is the corresponding weighting coefficient and is assumed to be known.

Here, $x_{t}(k)$ for $t=1,\dots,n$ are assumed to be independent and identically distributed (i.i.d.) standard normal random variables, i.e., $x_{t}(k)\sim  \mathcal N (0,1).$ The $N\times N$ matrix with entries $A_{kj}$ is denoted as the network combination matrix $\mA$ that satisfies the following assumption.
	\begin{assumption}\label{Assu1}
	The network combination matrix $\mA$ is symmetric and obtained by applying the Laplacian
	combination rule \cite{matta2018consistent}: $$
	A_{k j}= \begin{cases}G_{kj}(1-\mu) \frac{\lambda}{d_{\max }}, & k \neq j, \\ (1-\mu)-\sum_{z \neq k} A_{k z}, & k=j,\end{cases}
	$$
	where $\lambda$ is a parameter with $\lambda\leq1$, $G_{kj}$ is the entry of matrix $\mG$, which is the adjacency matrix of the undirected random network, the degree of the $k$-th node is defined as $d_k=1+\sum_{z\neq k}G_{kz}$, and $d_{\max }$ is the maximum degree of matrix $\mG$.
	\end{assumption}
	Since the network combination matrix $\mA$ depends on the adjacency matrix $\mG$, it also represents the network topology structure. Specifically, $A_{kj}$ equals 0 if the $k$-th node disconnects from the $j$-th node, and $A_{kj}$ is a positive number $(1-\mu)\frac{\lambda} {d_{\max }}$, if the $k$-th node connects with the $j$-th node.

By stacking the input and output signals across the network at time $t$, equation \eqref{va1} is rewritten as
\begin{equation}\label{stack}
\vy_{t+1} = \mA\vy_{t} + \mu\vx_{t+1},
\end{equation}
where $\vy_{t}\triangleq[y_{t}(1),\dots,y_{t}(N)]$, $\vx_{t}\triangleq[x_{t}(1),\dots,x_{t}(N)]\in \mathbb{R}^N$. 

Assume that only a subset of the network nodes can be monitored. The monitored set is denoted as $S\subset\{1,\dots,N\}$. We only receive a partial output signal $[\vy_t]_S\triangleq \{y_{t}(k)|k\in S\}$. However, according to \eqref{stack}, the observed signals are influenced by the hidden nodes.

In network topology inference, the goal is to estimate the partial combination matrix $\mA_{S}$ and recover the partial adjacency matrix $\mG_S$ by the data monitored $[\vy_t]_S$. 

\section{Main Results}
This section derives an unbiased estimator for $\mA_S$ and a new node pair clustering based on the GMM method. Finally, we present the NIG algorithm to infer the network topology.
\subsection{An Unbiased Estimator of Matrix $\mA_S$}
First, by summing up $\vx_t$ in \eqref{stack} from 0 to $t$ with the assumption $\vy_0=\mu\vx_0$, $x_{0}(k)\sim  \mathcal N (0,1)$ for any $k$, we obtain
\begin{equation}\label{sum}
	\vy_{t} = \mu\sum_{j=0}^t \mA^{t-j}\vx_{j}.
\end{equation}
Multiplying both sides of equation \eqref{stack} by
	$\vy_{t}^{T}$ and taking the expectations, we obtain
\begin{equation}\label{VAR_change}
	\mathbb{E}[\vy_{t+1}\vy_{t}^T] = \mA\mathbb{E}[\vy_{t}\vy_{t}^T]+\mu\mathbb{E}[\vx_{t+1}\vy_{t}^T].
\end{equation}
Since $\vx_j$, $j=0,\ldots,t+1,$ are i.i.d., according to \eqref{sum}, we obtain $\mathbb{E}[\vx_{t+1}\vy_{t}^T]=0.$
Then equation \eqref{VAR_change} is equivalent to
\begin{equation}\label{AR}
	\mR_1(t)=\mA\mR_0(t),
\end{equation}
where  $\mR_1(t)=\mathbb{E}[\vy_{t+1}\vy_{t}^T]$ and $\mR_0(t)=\mathbb{E}[\vy_{t}\vy_{t}^T].$
Substituting (\ref{sum}) into $\mR_0(t)$ yields
\begin{align}
	\nonumber\mR_0(t)&=\mu^2\mathbb{E}\Big[\Big(\sum_{j=0}^{t} \mA^{t-j}\vx_{j}\Big)\Big(\sum_{j=0}^{t} \mA^{t-j}\vx_{j}\Big)^T\Big]\\
	\label{eqn: R_0}&=\mu^2(\mI+\mA^{2}+\dots+\mA^{2t}),
\end{align}
where the last equality comes from the i.i.d. property and unit variance of the random variables $\vx_{t}$.
Using (\ref{AR}), we obtain
\begin{align}
	\label{eqn: R_1}\mR_1(t) = \mu^2(\mA+\mA^{3}+\dots+\mA^{2t+1}).
\end{align}
\begin{remark}
	Inspired by the residual estimator in \cite{matta2019graph} and using \eqref{eqn: R_0}-\eqref{eqn: R_1}, we can nullify
	the error arising from higher-order powers to obtain the combination matrix $\mA$ by subtraction between $\mR_1(t)$ and $\mA^3\mR_0(t-1)$, which helps us to design an unbiased estimator for the partial combination matrix $\mA_S$.
\end{remark}
Since $\vy_{t}$ is an $N\times1$ vector for any $t$, we propose an $N\times N$ estimator of the matrix $\mA$ at time $t$:
\begin{equation}\label{un}
\mA^u(t) = \frac{1}{\mu^2}(\vy_{t+1}\vy_{t}^T-\vy_{t+2}\vy_{t-1}^T),
\end{equation}
and denote
$[\mA^{u}(t)]_S$ as the submatrix of $\mA^{u}(t)$, which is
\begin{align}
\nonumber[\mA^u(t)]_S &= \frac{1}{\mu^2}\Big([\vy_{t+1}\vy_{t}^T]_S-[\vy_{t+2}\vy_{t-1}^T]_S\Big)\\
\label{uns}&= \frac{1}{\mu^2}\Big([\vy_{t+1}]_S[\vy_{t}^T]_S-[\vy_{t+2}]_S[\vy_{t-1}^T]_S\Big),
\end{align}
where the last equality uses the property of the block matrix product.

Equation \eqref{uns} reveals that the estimator $[\mA^u(t)]_S$ can be obtained by the observed signals.
	It is worth noting that the performance of the proposed estimator also depends on the
	hidden nodes, see Remark 2. 
Next, we show that the proposed estimator benefits from the following property.
\begin{proposition}\label{pro}
	The proposed matrix estimator $\mA^u(t)$ denoted in \eqref{un} is an unbiased estimator of matrix $\mA$. Moreover, $[\mA^u(t)]_S$ is an unbiased estimator of matrix $\mA_S.$
\end{proposition}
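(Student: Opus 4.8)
The plan is to compute $\mathbb{E}[\mA^u(t)]$ directly and show it equals $\mA$; the claim about $[\mA^u(t)]_S$ then follows immediately by taking the submatrix indexed by $S$ on both sides, since expectation commutes with the extraction of a fixed block.

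First I would take expectations in the defining equation \eqref{un}:
\begin{align*}
\mathbb{E}[\mA^u(t)] = \frac{1}{\mu^2}\Big(\mathbb{E}[\vy_{t+1}\vy_{t}^T]-\mathbb{E}[\vy_{t+2}\vy_{t-1}^T]\Big) = \frac{1}{\mu^2}\Big(\mR_1(t)-\mathbb{E}[\vy_{t+2}\vy_{t-1}^T]\Big),
\end{align*}
using the notation $\mR_1(t)=\mathbb{E}[\vy_{t+1}\vy_t^T]$ already introduced. The first term is handled by \eqref{eqn: R_1}, which gives $\mR_1(t)=\mu^2(\mA+\mA^3+\dots+\mA^{2t+1})$. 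For the cross term $\mathbb{E}[\vy_{t+2}\vy_{t-1}^T]$, I would substitute the representation \eqref{sum}, namely $\vy_{t+2}=\mu\sum_{j=0}^{t+2}\mA^{t+2-j}\vx_j$ and $\vy_{t-1}=\mu\sum_{i=0}^{t-1}\mA^{t-1-i}\vx_i$, multiply out, and use $\mathbb{E}[\vx_j\vx_i^T]=\mI$ if $i=j$ and $0$ otherwise. Only the indices $j=i$ with $0\le i\le t-1$ survive, so
\begin{align*}
\mathbb{E}[\vy_{t+2}\vy_{t-1}^T] = \mu^2\sum_{i=0}^{t-1}\mA^{t+2-i}\mA^{t-1-i} = \mu^2\sum_{i=0}^{t-1}\mA^{2t+1-2i} = \mu^2(\mA^{3}+\mA^{5}+\dots+\mA^{2t+1}).
\end{align*}
This is exactly the same geometric sum as $\mR_1(t)/\mu^2$ but missing the leading term $\mA$; equivalently it equals $\mA^3\mR_0(t-1)$ in light of \eqref{eqn: R_0}, matching the telescoping idea flagged in Remark 1.

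Subtracting, the $\mu^2$ cancels and all powers $\mA^3,\mA^5,\dots,\mA^{2t+1}$ cancel, leaving $\mathbb{E}[\mA^u(t)]=\mA$, which proves the first assertion. For the second, I would note that for any fixed index set $S$ the map $\mZ\mapsto\mZ_S$ is linear and entrywise, so $\mathbb{E}\big[[\mA^u(t)]_S\big]=\big[\mathbb{E}[\mA^u(t)]\big]_S=\mA_S$, giving unbiasedness of the observable estimator. The only mild subtlety to check carefully is the validity of \eqref{sum} (and hence of the moment computations) under the stated initialization $\vy_0=\mu\vx_0$ with $x_0(k)\sim\mathcal{N}(0,1)$, and that $\vx_{t+2}$ and $\vx_{t+1}$ are genuinely independent of all earlier inputs so that no extra terms appear; this is where I would be most careful, though it is a direct consequence of the i.i.d. assumption on the $\vx_t$. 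No concentration or asymptotics are needed here — that is reserved for the later convergence-in-probability result — so the proof is a short second-moment calculation whose main (and only real) obstacle is bookkeeping the index ranges in the double sum correctly.
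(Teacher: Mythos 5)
Your proposal is correct and follows essentially the same route as the paper: the paper defines $\mR_3(t-1)=\mathbb{E}[\vy_{t+2}\vy_{t-1}^T]$, shows via the three-step recursion of \eqref{stack} that it equals $\mA^3\mR_0(t-1)=\mu^2(\mA^3+\mA^5+\dots+\mA^{2t+1})$, and subtracts from $\mR_1(t)$ exactly as you do (your direct expansion of the double sum from \eqref{sum}, which implicitly uses the symmetry of $\mA$ from Assumption \ref{Assu1}, is just a slightly more hands-on way of reaching the same identity). The submatrix step is likewise handled in the paper by linearity of the block extraction, as in your argument.
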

\begin{proof}
	We set $\mR_3(t)=\mathbb{E}[\vy_{t+3}\vy_{t}^T]$. Using the definition of $\mR_1(t)$ in \eqref{VAR_change}, we obtain
	\begin{align}\label{eqn: EA}
		\mathbb{E}[\mA^u(t)] = \frac{1}{\mu^2}\big(\mR_1(t)-\mR_3(t-1)\big).
	\end{align}
	Additionally, using the recursion of \eqref{stack}, we obtain
	\begin{align*}
		\vy_{t+3} = \mA^3\vy_{t}+\mu \mA^2\vx_{t+1}+\mu \mA\vx_{t+2}+\mu\vx_{t+3}.
	\end{align*}
Thus,
$\mR_3(t) =  \mA^3\mathbb{E}[\vy_{t}\vy_{t}^T]=\mA^3\mR_0(t),$
where the first equality uses the i.i.d. property. Using \eqref{eqn: R_0}, we obtain
\begin{align}
	\label{eqn: R3_i_1}\mR_3(t-1)= \mu^2(\mA^{3}+\mA^{5}+\dots+\mA^{2t+1}).
\end{align}
Substituting \eqref{eqn: R_1} and \eqref{eqn: R3_i_1} into \eqref{eqn: EA}, we obtain
$
\mathbb{E}[\mA^u(t)] = \mA.
$

This proves that the estimator $\mA^u(t)$ is an unbiased estimator of $\mA$. Considering the partial observation with \eqref{uns}, we have
\begin{equation}\label{unbiased_S}
	\mathbb{E}\big[[\mA^{u}(t)]_S\big]=\frac{1}{\mu^2}\Big([\mR_1(t)]_{S}-[\mR_3(t-1)]_S\Big)=\mA_S.
\end{equation} Thus, $[\mA^u(t)]_S$ is an unbiased matrix estimator of $\mA_S.$
\end{proof}

\begin{remark}\label{remark2}
	The performance of the proposed estimator $[\mA^u(t)]_S$ not only depends on the observed signals but also on the hidden nodes since the hidden nodes influence the observed signals. Moreover, we also showed the hidden nodes do impact the variance of the proposed unbiased estimator in the supplementary material section.
\end{remark}

Proposition \ref{pro} shows that we obtain an unbiased estimator for $\mA_S$ at each time $t$. However, this estimator may have a large variance. To overcome this problem, we propose another unbiased matrix estimator that converges in probability to $\mA_S$. Based on the unbiased estimator and the cumulative samples, we denote
	$
	[\widehat{\mR}_1(n)]_S\triangleq\frac{1}{n}\sum_{t=1}^{n}[\vy_{t+1}]_S[\vy_t]_S^T,$ $[\widehat{\mR}_3(n-1)]_S\triangleq\frac{1}{n}\sum_{t=1}^{n}[\vy_{t+2}]_S[\vy_{t-1}]_S^T.
	$
Then, an \emph{efficient matrix estimator} using cumulative samples is proposed as
\begin{align}\label{eqn: hat_A_s}
	[\widehat{\mA}_n^{u}]_S =\frac{1}{\mu^2}\Big([\widehat{\mR}_1(n)]_S-[\widehat{\mR}_3(n-1)]_S\Big).
\end{align}
Thus, $[\widehat{\mA}_n^{u}]_S $ is the average of $[\mA^u(t)]_S$ from time $1$ to $n$.
Since $[\mA^u(t)]_S$ is an unbiased estimator of $\mA_S$ from Proposition \ref{pro}, $[\widehat{\mA}_n^{u}]_S $ is also unbiased. Besides, it has the following property.

\begin{theorem}\label{theorem}
	Under Assumption \ref{Assu1}, if the observed signals are collected by the VAR model with Gaussian noise in \eqref{va1}, then the unbiased estimator $[\widehat{\mA}_n^{u}]_S$ for the matrix $\mA_S$ satisfies 
	$[\widehat{\mA}_n^{u}]_S\stackrel{P}{\rightarrow} \mA_S,$ when $n\rightarrow\infty.$
\end{theorem}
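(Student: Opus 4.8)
The plan is to establish convergence in probability by showing that $[\widehat{\mA}_n^{u}]_S$ is unbiased (already known from Proposition~\ref{pro}) and that its variance — equivalently, the variance of each entry — tends to zero as $n\to\infty$; then Chebyshev's inequality gives the claim. Concretely, for each pair $(k,\ell)$ with $k,\ell\in S$, write the $(k,\ell)$ entry of $[\widehat{\mA}_n^{u}]_S$ as $\frac{1}{\mu^2 n}\sum_{t=1}^n Z_t^{(k,\ell)}$, where $Z_t^{(k,\ell)} = y_{t+1}(k)y_t(\ell) - y_{t+2}(k)y_{t-1}(\ell)$. Since $\mathbb{E}[Z_t^{(k,\ell)}]=\mu^2 A_{k\ell}$ by Proposition~\ref{pro}, it suffices to show $\operatorname{Var}\!\big(\frac{1}{n}\sum_{t=1}^n Z_t^{(k,\ell)}\big)\to 0$.

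The key step is to control $\operatorname{Var}\big(\sum_t Z_t^{(k,\ell)}\big) = \sum_{t,s}\operatorname{Cov}(Z_t^{(k,\ell)}, Z_s^{(k,\ell)})$. First I would invoke the Gaussianity: from \eqref{sum}, each $\vy_t$ is a linear combination of the i.i.d.\ standard normal vectors $\vx_0,\dots,\vx_t$, so the whole process $\{\vy_t\}$ is jointly Gaussian with mean zero. Hence each $Z_t^{(k,\ell)}$ is a quadratic form in jointly Gaussian variables, and covariances of such quadratic forms are expressible, via Isserlis'/Wick's theorem, purely in terms of the second moments $\mathbb{E}[y_a(i)y_b(j)]$, i.e.\ in terms of the entries of $\mR_m(t)=\mathbb{E}[\vy_{t+m}\vy_t^T]$. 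The crucial bound is that these second moments are uniformly bounded in $t$: since $\mA$ is symmetric with spectral radius $\rho(\mA)<1$ (this follows from Assumption~\ref{Assu1} — the Laplacian combination rule with $\lambda\le 1$ and $\mu\in(0,1)$ makes $\mA$ a substochastic-type symmetric matrix with row sums $1-\mu<1$, so $\|\mA\|_2=\rho(\mA)\le 1-\mu<1$), the geometric series $\mathI+\mA^2+\cdots+\mA^{2t}$ in \eqref{eqn: R_0} converges to $(\mathI-\mA^2)^{-1}$, and likewise $\mR_1(t), \mR_3(t)$ converge; in particular $\sup_t \|\mR_m(t)\|_2 < \infty$ for every fixed $m$, so every entry $\mathbb{E}[y_a(i)y_b(j)]$ is bounded by a constant $C$ depending only on $\mA$ and $\mu$. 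Moreover $\mathbb{E}[y_a(i)y_b(j)]$ depends on $a,b$ only through $|a-b|$ once we are in the stationary tail, and decays geometrically in $|a-b|$.

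From here the variance computation is routine but needs care: $\operatorname{Cov}(Z_t^{(k,\ell)},Z_s^{(k,\ell)})$ is a finite sum of products of two second-moment terms, each product bounded by $C^2$, and — because the second moments $\mathbb{E}[y_a(i)y_b(j)]$ decay geometrically in the time-lag $|a-b|$ — each covariance decays geometrically in $|t-s|$. Therefore $\sum_{t,s=1}^n \operatorname{Cov}(Z_t^{(k,\ell)},Z_s^{(k,\ell)}) = O(n)$, whence $\operatorname{Var}\big(\frac{1}{\mu^2 n}\sum_{t=1}^n Z_t^{(k,\ell)}\big) = O(1/n) \to 0$. Chebyshev's inequality then yields, for every $\varepsilon>0$, $\Pr\big(|[\widehat{\mA}_n^{u}]_S(k,\ell) - A_{k\ell}| > \varepsilon\big) \to 0$, and since $S$ is finite a union bound over the finitely many pairs $(k,\ell)\in S\times S$ gives $[\widehat{\mA}_n^{u}]_S \stackrel{P}{\rightarrow} \mA_S$.

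The main obstacle I anticipate is the bookkeeping in the Wick expansion of $\operatorname{Cov}(Z_t^{(k,\ell)}, Z_s^{(k,\ell)})$: each $Z_t^{(k,\ell)}$ is a difference of two products of Gaussians indexed by four distinct time points $(t+1,t,t+2,t-1)$, so the pairing combinatorics is somewhat involved, and one must be careful that the ``diagonal'' pairings (which would not decay in $|t-s|$) cancel against the mean $\mu^2 A_{k\ell}$ that is subtracted off, leaving only cross-time pairings that do decay. Establishing the geometric decay of $\mathbb{E}[y_a(i)y_b(j)]$ in $|a-b|$ — uniformly — is the technical heart; everything downstream is summing a geometric series.
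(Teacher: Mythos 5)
Your proposal is correct, and it reaches the conclusion by a genuinely different route from the paper's supplementary proof, though both share the same skeleton (unbiasedness from Proposition~\ref{pro}, variance $\to 0$, Chebyshev). The paper works at the matrix level: it repeatedly applies $\textmd{Var}(A\pm B)\le 2(\textmd{Var}(A)+\textmd{Var}(B))$ to $\textmd{Tr}[\textmd{Cov}(\widehat{\mA}_n^u)]$, splits $\vy_{t+j}\vy_t^T$ via the VAR recursion into an $\mA^j\vy_t\vy_t^T$ part and noise parts $\mA^{j-k}\vx_{t+k}\vy_t^T$, handles the noise parts by independence (cross terms vanish exactly, leaving an $O(1/n)$ diagonal sum), and handles the $\vy_t\vy_t^T$ part using the Wishart identity $\textmd{Cov}(\vy_t\vy_t^T)=2\mR_0(t)\otimes\mR_0(t)$ together with $\textmd{Cov}(\vy_j\vy_j^T,\vy_t\vy_t^T)=2(\mA^{j-t}\otimes\mA^{j-t})(\mR_0(t)\otimes\mR_0(t))$ and a matrix version of the Stolz/Ces\`{a}ro product lemma. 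You instead work entry-wise with $Z_t^{(k,\ell)}$, expand $\textmd{Cov}(Z_t^{(k,\ell)},Z_s^{(k,\ell)})$ by Isserlis/Wick (where, as you note, the diagonal pairing $\mathbb{E}[u_1u_2]\mathbb{E}[u_3u_4]$ cancels identically against the product of means, so no delicate cancellation is actually needed), and use $\mathbb{E}[\vy_{t+m}\vy_t^T]=\mA^m\mR_0(t)$ with $\sup_t\Vert\mR_0(t)\Vert<\infty$ and $\Vert\mA^m\Vert\le(1-\mu)^m$ to get geometric decay of the covariances in $|t-s|$ and hence an explicit $O(1/n)$ variance rate. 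Your route buys two things the paper's does not: an explicit convergence rate, and a justification of $\rho(\mA)\le 1-\mu<1$ from Assumption~\ref{Assu1} (nonnegative symmetric matrix with constant row sums $1-\mu$), a fact the paper invokes without derivation; it also avoids the Wishart identity and the matrix Stolz lemma. The paper's route avoids the Wick pairing bookkeeping. One minor imprecision on your side: the process is not stationary, so $\mathbb{E}[y_a(i)y_b(j)]$ does not depend only on $|a-b|$; but since your argument only uses uniform boundedness of $\mR_0(t)$ plus geometric decay of $\mA^{|a-b|}$, nothing breaks.
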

\begin{proof}
	For brevity, the detailed proof is available in the supplementary material.
\end{proof}

Theorem \ref{theorem} demonstrates that the unbiased estimator $[\widehat{\mA}_n^{u}]_S$ converges to $\mA_S$ in probability.
However, for a fixed network size $N$, the estimators proposed in \cite{matta2019graph} cannot guarantee convergence to $\mA_S$ as the number of samples grows.
In comparison, the advantage of the Granger estimator is that it does not rely on the symmetry assumption. Therefore, it can be used in non-symmetric scenarios.

The detailed comparisons are shown in Table \ref{table:esti}.
\begin{table}[h]
	\caption{Comparison of different estimators}
	\label{table:esti}
	\centering
	\setlength\tabcolsep{5pt}
	\begin{tabular}[1]{cccccccc}
		\hline
		Estimator& Expression& Bias&Convergence  ({*})\\
		\hline
		Granger \cite{matta2019graph}& $[\widehat{\mR}_1(n)]_S([\widehat{\mR}_0(n)]_S)^{-1}$& biased&\ding{53}\\
		One-lag \cite{matta2019graph}& $[\widehat{\mR}_1(n)]_S$& biased&\ding{53}\\
		Residual \cite{matta2019graph}& $[\widehat{\mR}_1(n)]_S-[\widehat{\mR}_0(n)]_S$& biased&\ding{53}\\
		\textbf{Proposed}& $[\widehat{\mR}_1(n)]_S-[\widehat{\mR}_3(n-1)]_S$& unbiased&\ding{52}\\
		\hline
	\end{tabular}
	\begin{flushleft}
		where $[\widehat{\mR}_0(n)]_S\triangleq\frac{1}{n}\sum_{t=1}^{n}[\vy_{t}]_S[\vy_t]_S^T$. $(*)$ means that for a fixed network size $N,$ the estimator converges in probability to $\mA_S$ when $n\rightarrow\infty$.
	\end{flushleft}
\end{table}

\subsection{Node Pair Clustering}
We have developed an unbiased estimator $[\widehat{\mA}_n^{u}]_S $, as denoted in \eqref{eqn: hat_A_s}, that
quantifies the strength of the connections among the network nodes. Considering the binary property of the adjacency matrix, we require the standard clustering algorithms, which allow grouping the entries of the unbiased estimator into two clusters, indicating the ‘disconnection’ and ‘connection’ pairs of the observed network.

To select a suitable clustering method, we complete experiments on ER models to visualize the distribution of the entries of the unbiased matrix estimator $[\widehat{\mA}_n^{u}]_S$ in Fig. \ref{sca}.
For convenience, the `disconnection' and `connection' clusters are denoted as `0' and `1', respectively. $\footnote{In this experiment, we set the network size $N=200$, the connection probability of ER model $p=0.1$, the fraction of the observed nodes $\xi=0.2$ ($S=40$), the parameter of the Laplacian $\lambda = 0.99$ and the weight $\mu=0.1$.}$

\begin{figure}[t]
	\centering
	\subfigure[$N=200,\ \xi=0.2,\ n=10^4$.]{
		\begin{minipage}[t]{0.3\textwidth}
			\centering
			\includegraphics[width=\textwidth]{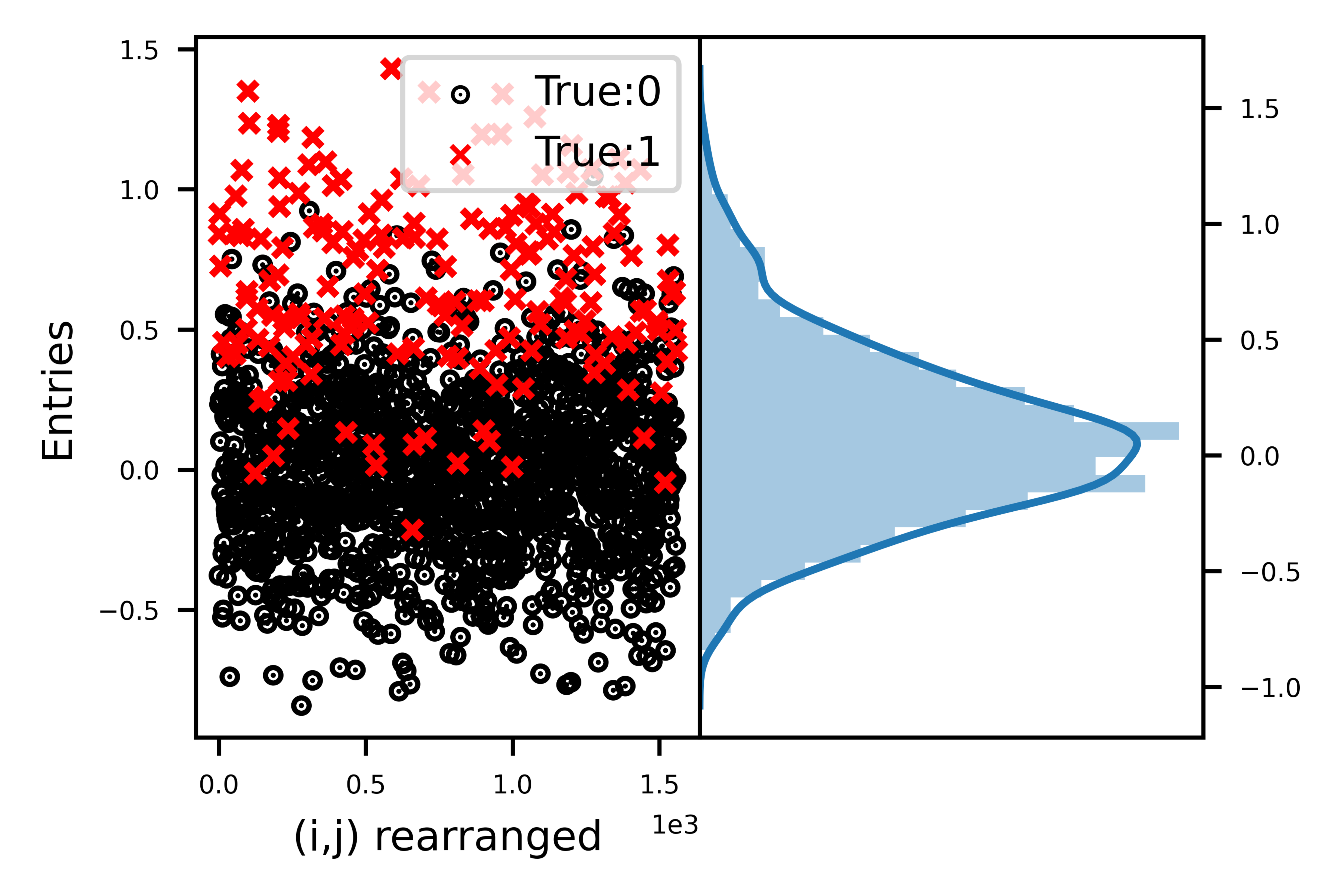}
		\end{minipage}%
	}%
	\subfigure[$N=200,\ \xi=0.2,\ n=10^5$.]{
		\begin{minipage}[t]{0.3\textwidth}
			\centering
			\includegraphics[width=\textwidth]{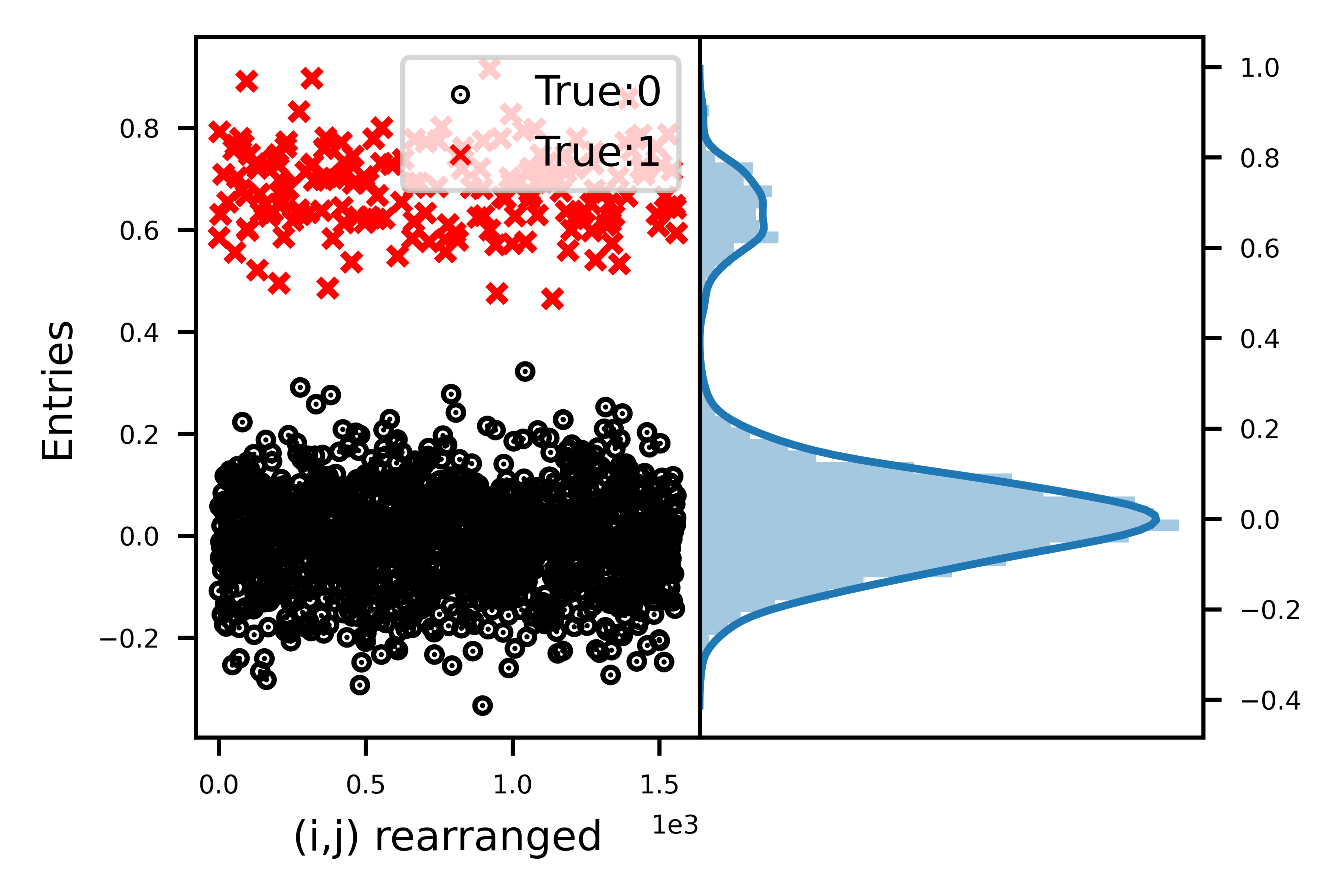}
		\end{minipage}%
	}%
	\centering
	\caption{
		Scatter plots and histograms of the entries of
		the unbiased matrix estimator.}
	\label{sca}
\end{figure}

Fig. \ref{sca} (a) shows that entries are largely mixed under the small size situation. In addition, their distributions also partly overlap. Thus, it is difficult to infer the network topology when the sample size is small. However, when the sample size $n$ is large, as shown in Fig. \ref{sca} (b), the entries are perfectly separated into two clusters, which is consistent with Theorem 1. The histograms in Fig. \ref{sca} reveal that their distribution seems to be a Gaussian mixture distribution.

Thus, we adopt the GMM algorithm \cite{najar2017comparison} 
to group the entries into two clusters. It is known that the GMM algorithm is implemented by the expectation-maximum algorithm, which is sensitive to the initial conditions \cite{biernacki2003choosing}.
Since the value for the disconnected pairs under the Laplacian rule
is 0, the expectation of the unbiased estimator for the disconnected pairs is also 0 (see Fig. \ref{sca}).
Thus, we set 0 as the initial mean for the disconnected pairs when using the GMM algorithm.

\subsection{Proposed Method}
The proposed NIG method is summarized in Algorithm \ref{algorithm}, which is used to infer network topology under partial observability.
\begin{algorithm}[h]\label{algorithm}
	
	\caption{NIG algorithm}
	\KwIn{Streaming data $[\vy_0]_S,\dots,[\vy_{n+2}]_S$; Weight $\mu$;}
	\KwOut{Prediction matrix $\mG_S$}
	
	$[\widehat{\mR}_1(n)]_S\leftarrow\frac{1}{n}\sum_{t=1}^{n}[\vy_{t+1}]_S[\vy_t]_S^T$
	
	$[\widehat{\mR}_3(n-1)]_S\leftarrow\frac{1}{n}\sum_{t=1}^{n}[\vy_{t+2}]_S[\vy_{t-1}]_S^T$
	
	$[\widehat{\mA}_n^{u}]_S\leftarrow\frac{1}{\mu^2}([\widehat{\mR}_1(n)]_S-[\widehat{\mR}_3(n-1)]_S)$
	
	$\mG_S\leftarrow$ GMM$([\widehat{\mA}_n^{u}]_S)$~~~\# $\textmd{initial mean: 0.}$
	
	
\end{algorithm}
Steps 1-3 are completed to generate the unbiased estimator $[\widehat{\mA}_n^{u}]_S$. Step 4 is completed using the GMM clustering method for the entries of the unbiased matrix estimator.

\section{Numerical Experiments}

\begin{figure*}[htbp]
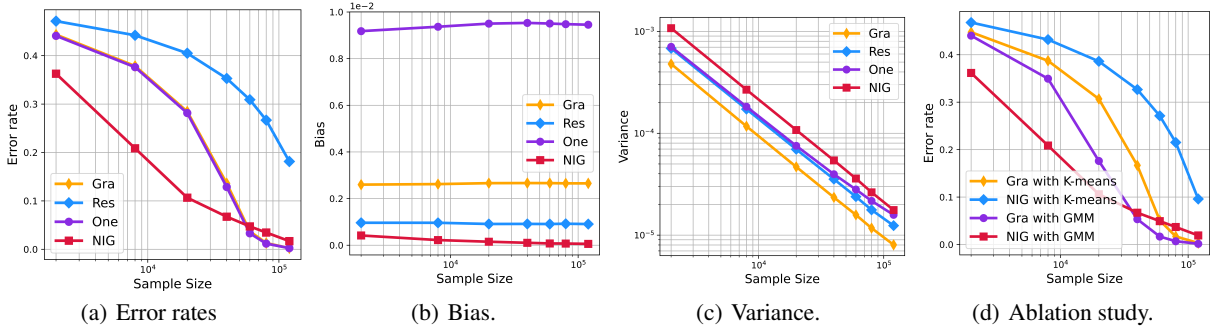

	\centering
	\subfigure[Error rates]{
		\begin{minipage}[t]{0.24\textwidth}
			\centering
			\includegraphics[width=\textwidth]{figure/AQ-error-PLUS-eps-converted-to}
		\end{minipage}%
	}%
	\subfigure[Bias.]{
		\begin{minipage}[t]{0.24\textwidth}
			\centering
			\includegraphics[width=\textwidth]{figure/AQ-bias-eps-converted-to}
		\end{minipage}%
	}%
	\subfigure[Variance.]{
		\begin{minipage}[t]{0.24\textwidth}
			\centering
			\includegraphics[width=\textwidth]{figure/AQ-variance-eps-converted-to}
		\end{minipage}%
	}%
	\subfigure[Ablation study.]{
		\begin{minipage}[t]{0.24\textwidth}
			\centering
			\includegraphics[width=\textwidth]{figure/AQ-ablation-eps-converted-to}
		\end{minipage}%
	}
	\centering
	\caption{Comparison of the proposed NIG algorithm with other methods.}
	\label{bvar}
\end{figure*}
\begin{figure*}[htbp]
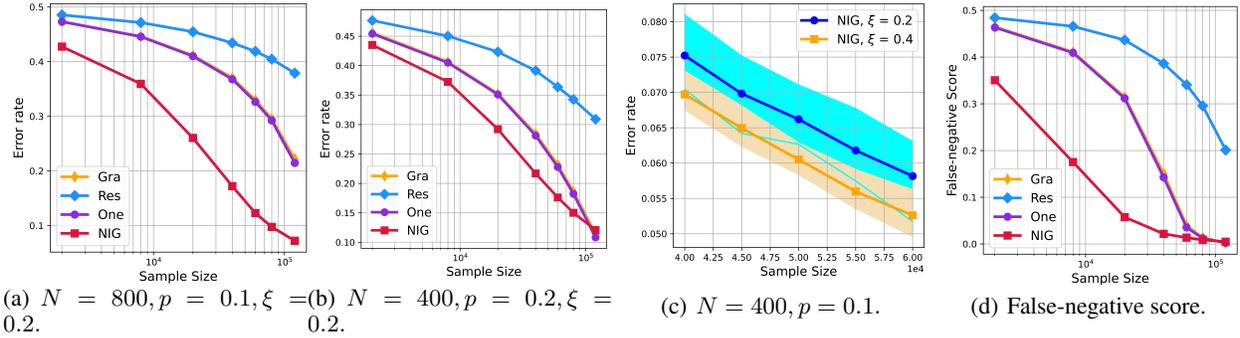

	\centering
	\subfigure[$N=800, p=0.1, \xi=0.2.$]{
		\begin{minipage}[t]{0.24\textwidth}
			\centering
			\includegraphics[width=\textwidth]{figure/AQ-error8-plus-eps-converted-to}
		\end{minipage}%
	}%
	\subfigure[$N=400, p=0.2, \xi=0.2.$]{
		\begin{minipage}[t]{0.24\textwidth}
			\centering
			\includegraphics[width=\textwidth]{figure/AQ-error2-eps-converted-to}
		\end{minipage}%
	}%
	\subfigure[$N=400, p=0.1.$]{
		\begin{minipage}[t]{0.25\textwidth}
			\centering
			\includegraphics[width=\textwidth]{figure/AQ_zhixin_gauss1_mu1-eps-converted-to}
		\end{minipage}%
	}
	\subfigure[False-negative score.]{
		\begin{minipage}[t]{0.24\textwidth}
			\centering
			\includegraphics[width=\textwidth]{figure/AQ-FN-PLUS-eps-converted-to}
		\end{minipage}%
	}
	\centering
	\caption{Comparison of the proposed NIG algorithm under different settings.}
	\label{error}
\end{figure*}
In this section, we examine the numerical performance of
the proposed NIG algorithm and present comparison
results with three existing methods, namely, one-lag (One), residual (Res), and Granger (Gra) estimators followed by the K-means clustering algorithm in \cite{matta2019graph}. If not otherwise specified, the NIG algorithm uses the GMM clustering algorithm. The error rates are adopted for performance evaluation, which are the fractions of network node pairs that are incorrectly recovered. 

Consider a network with $N$ nodes, where, according to the ER model, edges are determined by independent Bernoulli distributions with link probability $p$.
If not otherwise specified, we set $N=400, \xi = 0.2, \mu = 0.1, p=0.1.$
The parameter $\lambda$ of the Laplacian rule is set as 0.99. To assess the performance of the proposed unbiased estimator in the NIG algorithm,
we consider the following scenarios, where the results are averaged over 50 experiments.
\begin{itemize}
	\item 
	The proposed algorithm is compared with other methods in terms of error rate, bias, and variance in Fig. \ref{bvar}.
	\item 
	An ablation study is provided to clarify where the gain of the proposed algorithm arises. Specifically, in Fig. \ref{bvar} (d),
	we show the results from the K-means and GMM algorithms
	for the proposed unbiased estimator and Granger estimator, respectively.
	\item We examine the error rates and 95$\%$ confidence intervals of the NIG algorithm when varying $N$, $p$, and $\xi$ in Fig. \ref{error}. Moreover, in Fig. \ref{error} (d), we compare these algorithms in terms of a popular metric, the false-negative score \cite{sokolova2006beyond}, which is defined as the fraction of disconnected node pairs declared as connected.
\end{itemize}
Fig. \ref{bvar} (a) shows that the proposed estimator has the lowest error rate compared to the other estimators in the case of small sample sizes. However, when the sample size becomes larger than $0.8*10^5$, the proposed algorithm has higher error rates than the Granger and one-lag estimators. The reason may be that
the proposed estimator has the smallest bias, which mainly affects the performance and makes the proposed unbiased estimator perform best in the small sample size situation. However, with the gradual increase in sample sizes, the variance becomes the main factor affecting the performance. Thus, the larger variance leads to the higher error rates of the proposed NIG algorithm than baselines (see Fig. \ref{bvar} (b)-(c)). In addition, one can observe that its variance converges to 0 as $n$ increases, which is consistent with Theorem \ref{theorem}.

Fig. \ref{bvar} (d) shows that both the proposed and Granger estimators with the GMM algorithm outperform those with the K-means algorithm. Thus, the GMM algorithm provides some gains for both estimators. When the sample size is small, one can also observe that the proposed estimator has a better performance than the Granger estimator with the same GMM algorithm. The reason is that the unbiased estimator provides a better initialization for the GMM clustering algorithm.
Thus, the gains of the proposed method arise from both the new estimator and the GMM algorithm.

Comparing the proposed method with the baselines in Fig. \ref{error} (a) and Fig. \ref{bvar} (a), we see that the performance gap grows with larger network sizes. Fig. \ref{error} (b) shows that all the estimators have higher error rates when the graph becomes denser, but the proposed estimator still performs best 
with small sample sizes.
Fig.\ref{error} (c) shows that, with a higher fraction of observed nodes, the proposed NIG algorithm has a narrower confidence interval and a lower average error rate.
	The reason may be that the impact of the hidden nodes on the observed signals decreases when the fraction of the observed nodes increases since there are fewer hidden nodes. In addition, Fig. \ref{error} (d) shows the false-negative score. A satisfactory estimator should have small false-negative scores. Fig. \ref{error} (d) shows that the proposed method has the lowest false-negative score and demonstrates that the proposed method has the highest probability of recovering the disconnected pairs correctly. On the other hand, the proposed NIG algorithm has contrary performance under the false-positive score \cite{sokolova2006beyond}.

\section{Conclusion}
In this letter, the novel NIG algorithm, including an unbiased estimator and a GMM algorithm for network topology inference under partial observability, is proposed. Theoretically, we have proven that the unbiased estimator converges to the network combination matrix in probability and helps us obtain a better initialization for the GMM algorithm. The simulations have shown that the proposed NIG algorithm obtains a better performance with small sample sizes, sparse networks, and high fractions of observed nodes. Future work aims to improve the algorithms by combining the advantages of the biased estimators and the proposed method.

\bibliographystyle{ieeetr}
\bibliography{bare_jrnl1021}

\begin{thebibliography}{10}

\bibitem{dong2019learning}
X.~Dong, D.~Thanou, M.~Rabbat, and P.~Frossard, ``Learning graphs from data: A
  signal representation perspective,'' {\em IEEE Signal Processing Magazine},
  vol.~36, no.~3, pp.~44--63, 2019.

\bibitem{yang2020network}
X.~Yang, M.~Sheng, Y.~Yuan, and T.~Q. Quek, ``Network topology inference from
  heterogeneous incomplete graph signals,'' {\em IEEE Transactions on Signal
  Processing}, vol.~69, pp.~314--327, 2020.

\bibitem{mateos2019connecting}
G.~Mateos, S.~Segarra, A.~G. Marques, and A.~Ribeiro, ``Connecting the dots:
  Identifying network structure via graph signal processing,'' {\em IEEE Signal
  Processing Magazine}, vol.~36, no.~3, pp.~16--43, 2019.

\bibitem{vosoughi2020large}
M.~A. Vosoughi and A.~Wismuller, ``Large-scale kernelized granger causality to
  infer topology of directed graphs with applications to brain networks,'' {\em
  arXiv preprint arXiv:2011.08261}, 2020.

\bibitem{zhang2014topology}
R.~Zhang, Y.~Li, and X.~Li, ``Topology inference with network tomography based
  on t-test,'' {\em IEEE Communications Letters}, vol.~18, no.~6, pp.~921--924,
  2014.

\bibitem{sharma2019communication}
P.~Sharma, D.~J. Bucci, S.~K. Brahma, and P.~K. Varshney, ``Communication
  network topology inference via transfer entropy,'' {\em IEEE Transactions on
  Network Science and Engineering}, vol.~7, no.~1, pp.~562--575, 2019.

\bibitem{matta2020graph}
V.~Matta, A.~Santos, and A.~H. Sayed, ``Graph learning under partial
  observability,'' {\em Proceedings of the IEEE}, vol.~108, no.~11,
  pp.~2049--2066, 2020.

\bibitem{wai2019community}
H.-T. Wai, Y.~C. Eldar, A.~E. Ozdaglar, and A.~Scaglione, ``Community inference
  from graph signals with hidden nodes,'' in {\em IEEE International Conference
  on Acoustics, Speech and Signal Processing (ICASSP)}, pp.~4948--4952, IEEE,
  2019.

\bibitem{omidshafiei2017deep}
S.~Omidshafiei, J.~Pazis, C.~Amato, J.~P. How, and J.~Vian, ``Deep
  decentralized multi-task multi-agent reinforcement learning under partial
  observability,'' in {\em International Conference on Machine Learning},
  pp.~2681--2690, PMLR, 2017.

\bibitem{segarra2017network}
S.~Segarra, A.~G. Marques, G.~Mateos, and A.~Ribeiro, ``Network topology
  inference from spectral templates,'' {\em IEEE Transactions on Signal and
  Information Processing over Networks}, vol.~3, no.~3, pp.~467--483, 2017.

\bibitem{santos2018consistent}
A.~Santos, V.~Matta, and A.~H. Sayed, ``Consistent tomography over diffusion
  networks under the low-observability regime,'' in {\em IEEE International
  Symposium on Information Theory (ISIT)}, pp.~1839--1843, IEEE, 2018.

\bibitem{santos2020local}
A.~Santos, V.~Matta, and A.~H. Sayed, ``Local tomography of large networks
  under the low-observability regime,'' {\em IEEE Transactions on Information
  Theory}, vol.~66, no.~1, pp.~587--613, 2020.

\bibitem{chandrasekaran2010latent}
V.~Chandrasekaran, P.~A. Parrilo, and A.~S. Willsky, ``Latent variable
  graphical model selection via convex optimization,'' in {\em 48th Annual
  Allerton Conference on Communication, Control, and Computing (Allerton)},
  pp.~1610--1613, IEEE, 2010.

\bibitem{buciulea2019network}
A.~Buciulea, S.~Rey, C.~Cabrera, and A.~G. Marques, ``Network reconstruction
  from graph-stationary signals with hidden variables,'' in {\em 53rd Asilomar
  Conference on Signals, Systems, and Computers}, pp.~56--60, IEEE, 2019.

\bibitem{erdHos1959random}
P.~Erd{\H{o}}s and A.~R{\'e}nyi, ``{On random graphs I Publ},'' {\em Math.
  Debrecen}, vol.~6, pp.~290--297, 1959.

\bibitem{matta2018consistent}
V.~Matta and A.~H. Sayed, ``Consistent tomography under partial observations
  over adaptive networks,'' {\em IEEE Transactions on Information Theory},
  vol.~65, no.~1, pp.~622--646, 2018.

\bibitem{matta2019graph}
V.~Matta, A.~Santos, and A.~H. Sayed, ``Graph learning over partially observed
  diffusion networks: Role of degree concentration,'' {\em IEEE International
  Symposium on Information Theory (ISIT)}, pp.~1312--1316, 2019.

\bibitem{hartigan1979algorithm}
J.~A. Hartigan and M.~A. Wong, ``Algorithm {AS} 136: A {K}-means clustering
  algorithm,'' {\em Journal of the Royal Statistical Society. Series C (Applied
  Statistics)}, vol.~28, no.~1, pp.~100--108, 1979.

\bibitem{najar2017comparison}
F.~Najar, S.~Bourouis, N.~Bouguila, and S.~Belghith, ``A comparison between
  different gaussian-based mixture models,'' in {\em 14th International
  Conference on Computer Systems and Applications (AICCSA)}, pp.~704--708,
  IEEE, 2017.

\bibitem{biernacki2003choosing}
C.~Biernacki, G.~Celeux, and G.~Govaert, ``Choosing starting values for the em
  algorithm for getting the highest likelihood in multivariate gaussian mixture
  models,'' {\em Computational Statistics \& Data Analysis}, vol.~41, no.~3-4,
  pp.~561--575, 2003.

\bibitem{sokolova2006beyond}
M.~Sokolova, N.~Japkowicz, and S.~Szpakowicz, ``Beyond accuracy, f-score and
  roc: a family of discriminant measures for performance evaluation,'' in {\em
  Australasian Joint Conference on Artificial Intelligence}, pp.~1015--1021,
  Springer, 2006.

\bibitem{christensen2015covariance}
R.~Christensen, ``Covariance of the {W}ishart distribution with applications to
  regression,'' {\em Department of Mathematics and of Statistics, University of
  New Mexico}, 2015.

\bibitem{petersen2008matrix}
K.~B. Petersen, M.~S. Pedersen, {\em et~al.}, ``The matrix cookbook,'' {\em
  Technical University of Denmark}, vol.~7, no.~15, p.~510, 2008.

\bibitem{ferentios1982tcebycheff}
K.~Ferentinos, ``On tcebycheff's type inequalities,'' {\em Trabajos de
  Estadistica y de Investigacion Operativa}, vol.~33, no.~1, p.~125, 1982.

\bibitem{schmeelk2013elementary}
J.~Schmeelk, D.~Taka{\v{c}}i, and A.~Taka{\v{c}}i, {\em Elementary Analysis
  through Examples and Exercises}.
\newblock Springer, 1995.

\bibitem{eaton2007wishart}
M.~L. Eaton, ``The wishart distribution,'' in {\em Multivariate Statistics},
  pp.~302--333, Institute of Mathematical Statistics, 2007.

\end{thebibliography}

	{\setcounter{equation}{0}
		\renewcommand{\theequation}{S\arabic{equation}}
		
		{\setcounter{equation}{0}
			\renewcommand{\theequation}{S\arabic{equation}}

			\renewcommand{\thetable}{S\arabic{table}}
			\renewcommand{\thefigure}{S\arabic{figure}}
			
\section{Supplementary Material}
			
			\textbf{Notations:} 
				Define $(\mM)_v\triangleq\textmd{vec}(\mM)$ to be the vectorization of a matrix $\mM$. The covariance between the two random matrices $\mM_1$ and $\mM_2$ is denoted as $\textmd{Cov}(\mM_1, \mM_2) \triangleq \textmd{Cov}[(\mM_1)_v, (\mM_2)_v]= \mathbb{E}[(\mM_1-\mathbb{E}[\mM_1])_v(\mM_2-\mathbb{E}[\mM_2])_v^T]$ \cite{christensen2015covariance} \cite[eq. (305)]{petersen2008matrix}. If $\mM_1=\mM_2$, the covariance $\textmd{Cov}(\mM_1, \mM_1)$ is abbreviated as $\textmd{Cov}(\mM_1)$. Moreover, $\textmd{Tr}[\textmd{Cov}(\mM)]=\sum_{i,j}\textmd{Var}(M_{ij}),$ which explains that the trace of covariance equals the sum of all entries' variances. The spectral radius of the combination matrix $\mA$ is denoted as $\rho(\mA).$ Finally, $n$ denotes the number of samples on each node. $\textbf{0}$ denotes the zero matrices.}
		\subsection{Proof of Theorem 1}
		Let us define the combination matrix estimator $\widehat{\mA}_n^{u}\triangleq \frac{1}{\mu^2}[\widehat{\mR}_1(n)-\widehat{\mR}_3(n-1)],$
		where $\widehat{\mR}_3(n-1)\triangleq\frac{1}{n}\sum_{t=1}^{n}\vy_{t+2}\vy_{t-1}^T=\frac{1}{n}\sum_{t=0}^{n-1}\vy_{t+3}\vy_{t}^T$ and $\widehat{\mR}_1(n)\triangleq\frac{1}{n}\sum_{t=1}^{n}\vy_{t+1}\vy_t^T$. Then, 
		using the fact that $\textmd{Var}(A\pm B)\leq2(\textmd{Var}(A)+\textmd{Var}(B)),$
		we have
		\begin{equation}\label{inequality}
			\textmd{Tr}[\textmd{Cov}(\mM_1+\mM_2)]\leq2(\textmd{Tr}[\textmd{Cov}(\mM_1)]+\textmd{Tr}[\textmd{Cov}(\mM_2)]).
		\end{equation}
	Then, applying \eqref{inequality} on $\widehat{\mA}_n^{u}$, we have
		\begin{equation}\label{eqn:ineq}	0\leq\textmd{Tr}[\textmd{Cov}(\widehat{\mA}_n^{u})]\leq \frac{2}{\mu^4}\textmd{Tr}\Big[\textmd{Cov}\Big(\widehat{\mR}_1(n)\Big)\Big]+
			\frac{2}{\mu^4}\textmd{Tr}\Big[\textmd{Cov}\Big(\widehat{\mR}_3(n-1)\Big)\Big],
		\end{equation}
		where $\textmd{Tr}$ is the trace of a matrix. From the VAR model in (2), we have $\vy_{t+j} = \mA^j\vy_{t}+\mu(\sum_{k=1}^{j}\mA^{j-k}\vx_{t+k})$, which is  multiplied on both sides by
		$\vy_{t}^{T}$, yielding
		$
		\mathbf{y}_{t+j} \mathbf{y}_{t}^{T}=\mathbf{A}^{j} \mathbf{y}_{t} \mathbf{y}_{t}^{T}+\mu \sum_{k=1}^{j} \mathbf{A}^{j-k} \mathbf{x}_{t+k} \mathbf{y}_{t}^{T}, \forall j \geq 1.$ Considering $\mM_1+ \mM_2 + \mM_3 + \mM_4$, using \eqref{inequality}, we have
		\begin{align}\label{8}
			&\textmd{Tr}[\textmd{Cov}(\mM_1+\mM_2+ \mM_3 + \mM_4)] \leq 2\textmd{Tr}[\textmd{Cov}(\mM_1)]+2\textmd{Tr}[\textmd{Cov}(\mM_2+ \mM_3 + \mM_4)]\\
			\nonumber&\leq 2\textmd{Tr}[\textmd{Cov}(\mM_1)]+4\textmd{Tr}[\textmd{Cov}(\mM_2)]+ 4\textmd{Tr}[\textmd{Cov}(\mM_3 + \mM_4)]\\
			\nonumber&\leq 2\textmd{Tr}[\textmd{Cov}(\mM_1)]+4\textmd{Tr}[\textmd{Cov}(\mM_2)]+ 8\textmd{Tr}[\textmd{Cov}(\mM_3)] + 8\textmd{Tr}[\textmd{Cov}(\mM_4)]\\
			\nonumber&\leq 2\textmd{Tr}[\textmd{Cov}(\mM_1)]+8\sum_{i=2}^{4}\textmd{Tr}[\textmd{Cov}(\mM_i)].
		\end{align}
		
		Based on the definitions of $\widehat{\mR}_1(n)$, $\widehat{\mR}_3(n-1)$, \eqref{inequality}, and \eqref{8} we have
		\begin{align}
			\label{R3in}&\textmd{Tr}[\textmd{Cov}(\widehat{\mR}_1(n))]+\textmd{Tr}[\textmd{Cov}(\widehat{\mR}_3(n-1))]\\
			\nonumber&\leq 2\textmd{Tr}\Big[\textmd{Cov}\Big(\frac{1}{n}\sum_{t=1}^{n}\mA \vy_t\vy_t^T\Big)\Big]
			+8\mu^2\sum_{k=1}^{3}\textmd{Tr}\Big[\textmd{Cov}\Big(\frac{1}{n}
			\sum_{t=0}^{n-1}\mA^{3-k}\vx_{t+k}\vy_t^T\Big)\Big]
			\\
			\nonumber&+2\textmd{Tr}\Big[\textmd{Cov}\Big(\frac{1}{n}\sum_{t=0}^{n-1}\mA^3 \vy_t\vy_t^T\Big)\Big] +2\mu^2\textmd{Tr}\Big[\textmd{Cov}\Big(\frac{1}{n}\sum_{t=1}^{n}\vx_{t+1}\vy_t^T\Big)\Big].
		\end{align}
		Combining \eqref{R3in} with \eqref{eqn:ineq}, we derive an upper bound for $\textmd{Tr}[\textmd{Cov}(\widehat{\mA}_n^{u})]$. In Section \ref{sec_1} and Section \ref{sec_2}, we have proved $\textmd{Tr}[\textmd{Cov}(\frac{1}{n}\sum_{t=1}^{n}\vy_{t}\vy_t^T)]\rightarrow0$ and	$\textmd{Tr}[\textmd{Cov}(\frac{1}{n}\sum_{t=1}^{n}\vx_{t+j}\vy_t^T)]\rightarrow0$, yielding $\textmd{Tr}[\textmd{Cov}(\widehat{\mA}_n^{u})]\rightarrow0.$
		Since $\textmd{Tr}[\textmd{Cov}([\widehat{\mA}_n^{u}]_S)]$ is the sum of the partial diagonal entries of $\textmd{Cov}(\widehat{\mA}_n^{u})$,
		we have $\textmd{Tr}[\textmd{Cov}([\widehat{\mA}_n^{u}]_S)]\rightarrow0$.
		
		Since $[\widehat{\mA}_n^{u}]_S$ is an unbiased estimator of $\mA_S,$ 
		using the multidimensional Chebyshev's inequality \cite{ferentios1982tcebycheff}, we have
		\begin{equation}\label{eqn:prob_S}
			1\geq P\Big(\Vert[\widehat{\mA}_n^{u}]_S - \mA_S\Vert_F^2<\epsilon^2\Big)\geq 1 - \frac{\textmd{Tr}[\textmd{Cov}([\widehat{\mA}_n^{u}]_S)]}{\epsilon^2},
		\end{equation}
		for any $\epsilon>0.$ Then, substituting  $\textmd{Tr}[\textmd{Cov}([\widehat{\mA}_n^{u}]_S)]\rightarrow0$ into \eqref{eqn:prob_S}, we have $\lim\limits_{n\rightarrow\infty}P(\Vert([\widehat{\mA}_n^{u}]_S - \mA_S)\Vert_F<\epsilon)=1.$
		Therefore, we derive $[\widehat{\mA}_n^{u}]_S\stackrel{P}{\rightarrow} \mA_S$ when $n\rightarrow\infty.$
		\subsection{Proof of ${\textmd{Tr}}[{\textmd{Cov}}(\frac{1}{n}\sum_{t=1}^{n}\vx_{t+j}\vy_t^T)]\rightarrow0$, $j\geq1.$}\label{sec_1}
		Since $\textmd{Cov}(\frac{1}{n}\sum_{t=1}^{n}\vx_{t+j}\vy_t^T) = \frac{1}{n^2}\sum_{t=1}^{n}[\textmd{Cov}(\vx_{t+j}\vy_t^T)+\sum_{k\neq t}^{n}\textmd{Cov}(\vx_{k+j}\vy_k^T,\vx_{t+j}\vy_t^T)],$ we first consider the trace of the following covariance,
		\begin{align}
			&\nonumber\textmd{Tr}\Big[\textmd{Cov}\Big(\vx_{k+j}\vy_k^T,\vx_{t+j}\vy_t^T\Big)\Big]=\textmd{Tr}\Big[\mathbb{E}[(\vx_{k+j}\vy_k^T)_v(\vx_{t+j}\vy_t^T)^T_v]\Big]\\
			&\nonumber=\mathbb{E}\Big[\textmd{Tr}[(\vx_{t+j}\vy_t^T)^T_v(\vx_{k+j}\vy_k^T)_v]\Big]= \mathbb{E}\Big[\textmd{ Tr}[\textmd{Tr}[\vy_{t}\vx_{t+j}^T\vx_{k+j}\vy_{k}^T]]\Big]\\
			&\label{COV_total}=\mathbb{E}\Big[\textmd{Tr}[\vy_{t}\vx_{t+j}^T\vx_{k+j}\vy_{k}^T]\Big],
		\end{align}
		where the first equality comes from the definition of covariance and uses the fact that $\vx_t$ is an (i.i.d.) zero-mean random variable with $\mathbb{E}[\vx_{t+j}\vy_t^T]=0$, the second equality uses the property of trace in \cite[eq. (14)]{petersen2008matrix} and the third equality uses the property of vectorization, i.e., $(\mB)_v^T(\mC)_v=\textmd{Tr}(\mB^T\mC)$ \cite[eq. (521)]{petersen2008matrix}.
		
		When $k\neq t$, using the i.i.d. property of $\vx_t$ and (\ref{COV_total}), we have $\textmd{Tr}[\textmd{Cov}(\vx_{k+j}\vy_k^T,\vx_{t+j}\vy_t^T)]=0.$ When $k=t$, equation (\ref{COV_total}) yields	
		\begin{align}
			\textmd{Tr}[\textmd{Cov}(\vx_{t+j}\vy_t^T)]=\textmd{Tr}[\mathbb{E}[\vx^T_{t+j}\vx_{t+j}]\mathbb{E}[\vy^T_t\vy_t]].
		\end{align}
		where the equation uses the properties of trace, i.e.,  $\textmd{Tr}(\mB\mC)=\textmd{Tr}(\mC\mB)$ in \cite[eq. (14)]{petersen2008matrix}. Since $\vx_t\sim N(0,\mI_N),$ $\vx^T_t\vx_t$ follows a chi-square distribution with $N$ degrees of freedom and $\mathbb{E}[\vx^T_t\vx_t]=N$. Then, we have
		\begin{align}
			\textmd{Tr}[\textmd{Cov}(\vx_{t+j}\vy_t^T)]=N\textmd{Tr}[\mathbb{E}\vy_t\vy^T_t]=N\textmd{Tr}[\mR_0(t)]\label{lemma:var2},
		\end{align}
		where the last equation comes from $\vy_t\sim N(0,\mR_0(t))$ in equation (3).
		Finally, combining \eqref{lemma:var2} with \eqref{COV_total}, we obtain
		\begin{equation}\label{trace}		\textmd{Tr}\Big[\textmd{Cov}\Big(\frac{1}{n}\sum_{t=1}^{n}\vx_{t+j}\vy_t^T\Big)\Big]=\frac{N}{n^2}\textmd{Tr}\Big[\sum_{t=1}^{n}\mR_0(t)\Big].
		\end{equation}
		Using the Stolz Theorem \cite{schmeelk2013elementary}, we have
		$\lim\limits_{n \to \infty}\frac{1}{n}\sum_{t=1}^{n}\mR_0(t)=\lim\limits_{n \to \infty}\mR_0(n).$ Recall that  $\mR_0(t)=\mu^2(\mI+\mA^{2}+\dots+\mA^{2t})$, yielding that $\lim\limits_{n \to \infty}\mR_0(n)=\lim\limits_{n \to \infty}\mu^2\sum_{t=0}^{n}\mA^{2t}=\mu^2(\mI-\mA^2)^{-1}$, where the last equality comes from  $\rho(\mA)<1$ and \cite[eq. (487)]{petersen2008matrix}.
	 Thus, we obtain
	$\lim\limits_{n \to \infty}\frac{1}{n}\sum_{t=1}^{n}\mR_0(t)=\mu^2(\mI-\mA^2)^{-1}.$
	Substituting this into \eqref{trace}, we have $\textmd{Tr}[\textmd{Cov}(\frac{1}{n}\sum_{t=1}^{n}\vx_{t+j}\vy_t^T)]\rightarrow0,$ when $n\rightarrow\infty.$
		\subsection{Proof of ${\textmd{Tr}}[{\textmd{Cov}}(\frac{1}{n}\sum_{t=1}^{n}\vy_{t}\vy_t^T)]\rightarrow0$.}\label{sec_2}
		According to the VAR model (2), we have $\vy_j\vy_j^T=[\mA^{j-t}\vy_{t}+\mu\sum_{k=1}^{j-t}\mA^{j-t-k}\vx_{t+k}][\mA^{j-t}\vy_{t}+\mu\sum_{k=1}^{j-t}\mA^{j-t-k}\vx_{t+k}]^T,$ for $t<j.$ Then, let us first consider
		\begin{align}\label{step2}
			~~	\nonumber\textmd{Tr}\Big[\textmd{Cov}\Big(\vy_j\vy_j^T,\vy_t\vy_t^T\Big)\Big]
			&=\textmd{Tr}\Big[\textmd{Cov}\Big(\mA^{j-t}\vy_t\vy_t^T\mA^{j-t},\vy_t\vy_t^T\Big)\Big]\\
			&=\textmd{Tr}\Big[\mA^{j-t}\otimes\mA^{j-t}\textmd{Cov}\Big(\vy_t\vy_t^T\Big)\Big],
		\end{align}
		where the first equality uses \eqref{COV_total} and the last equality uses property
		$(\mA^{j-t}\vy_t\vy_t^T\mA^{j-t})_v=\mA^{j-t}\otimes\mA^{j-t}(\vy_t\vy_t^T)_v$ in \cite[eq. (520)]{petersen2008matrix}, where $\otimes$ means the Kronecker product.
		Since $\vy_t\sim N(0,\mR_0(t))$, the streaming data $\vy_t\vy_t^T$ follows a Wishart distribution. Then we have $\mathbb{E}[\vy_t\vy_t^T]=\mR_0(t)$ and $\textmd{Cov}(\vy_t\vy_t^T)=2\mR_0(t)\otimes \mR_0(t)$ in  \cite[Proposition 8.3]{eaton2007wishart}. Substituting this into \eqref{step2}, we obtain
		$\textmd{Cov}\Big(\vy_j\vy_j^T,\vy_t\vy_t^T\Big)=2\mA^{j-t}\otimes\mA^{j-t}(\mR_0(t)\otimes \mR_0(t))$. Thus, we have
			\begin{align}		    \nonumber&\textmd{Tr}\Big[\textmd{Cov}(\frac{1}{n}\sum_{t=1}^{n}\vy_{t}\vy_t^T)\Big]=\frac{1}{n^2}\sum_{t=1}^{n}\textmd{Tr}\Big[\textmd{Cov}(\vy_{t}\vy_t^T)+2\sum_{j> t}^{n}\textmd{Cov}(\vy_{j}\vy_j^T,\vy_{t}\vy_t^T)\Big]\\
				&=\textmd{Tr}\Big[\frac{2}{n^2}\sum_{t=1}^{n}\Big(1+2\sum_{j> t}^{n}\mA^{j-t}\otimes\mA^{j-t}\Big)\Big(\mR_0(t)\otimes\mR_0(t)\Big)\Big].\label{pei}
			\end{align}
		Then, we apply the corollary of the Stolz theorem \cite[Example 3.75]{schmeelk2013elementary}, which says that two convergent sequences $a_n$, $b_n$ satisfy that $\lim\limits_{n\rightarrow\infty}a_n=a$, $\lim\limits_{n\rightarrow\infty}b_n=b$, then, 
		\begin{equation}\label{C4}
			\lim\limits_{n\rightarrow\infty}\frac{a_1b_n+a_2b_{n-1}+\dots+a_nb_1}{n}=ab.
		\end{equation}
		One may notice that $a_{i+1}, b_{n-i} \in \mathbb{R}.$ However, \eqref{C4} also holds when $\left(a_{n}\right)_{n \in \mathbf{N}}$ and $\left(b_{n}\right)_{n \in \mathbf{N}}$ are matrix sequences by just using a matrix norm (for example, the Frobenius norm) to replace the absolute value in the proof of \cite[Example 3.75]{schmeelk2013elementary}. We provide the proof for the matrix version here:
		
		Let two matrix sequences $\left(A_{n}\right)_{n \in \mathbf{N}}$ and $\left(B_{n}\right)_{n \in \mathbf{N}}$ be given and let us define the sequence $\left(C_{n}\right)_{n \in \mathbf{N}}$ by
		$$
		C_{n}=\frac{A_{1} B_{n}+A_{2} B_{n-1}+\cdots+A_{n} B_{1}}{n}, \quad n=1,2, \ldots
		$$
		(a) If $\lim _{n \rightarrow \infty} A_{n}=\textbf{0}$ and $\Vert B_{n}\Vert_F \leq b$ for every $n \in \mathbf{N}$, then it holds $\lim _{n \rightarrow \infty} C_{n}=\textbf{0}$.
		
		(b) If $\lim _{n \rightarrow \infty} A_{n}=A$ and $\lim _{n \rightarrow \infty} B_{n}=B$, then it holds $\lim _{n \rightarrow \infty} C_{n}=AB$.
		
		Proof (a): From
		$$
		\lim _{n \rightarrow \infty} A_{n}=\textbf{0} \Rightarrow \lim _{n \rightarrow \infty}\Vert A_{n}\Vert_F=0,
		$$
		and the Stolz theorem, it follows
		$$
		\lim _{n \rightarrow \infty} \frac{1}{n} \sum_{k=1}^{n}\Vert A_{k}\Vert_F=\lim _{n \rightarrow \infty}\Vert A_{n}\Vert_F=0.
		$$
		Since the sequence $\left(B_{n}\right)_{n \in \mathbf{N}}$ is bounded, i.e., $\Vert B_{n}\Vert_F \leq b$ for every $n \in \mathbf{N}$, using the sub-additive and sub-multiplicative properties of matrix norm, i.e., $\Vert \mM_1+\mM_2\Vert_F\leq \Vert \mM_1\Vert_F + \Vert \mM_2\Vert_F$, $\Vert \mM_1\mM_2\Vert_F\leq \Vert \mM_1\Vert_F\Vert \mM_2\Vert_F$, we have $\Vert C_{n} \Vert_F \leq \frac{\Vert A_{1}\Vert_F+\Vert A_{2}\Vert_F+\cdots+\Vert A_{n}\Vert_F}{n} b$ and $\lim_{n \rightarrow \infty}\Vert C_{n}\Vert_F \leq \lim _{n \rightarrow \infty} \frac{b}{n} \sum_{k=1}^{n}\Vert A_{k}\Vert_F=0.$ This means that $\lim _{n \rightarrow \infty} C_{n}=\textbf{0}$.
		
		Proof (b): Let us put
		$\lim _{n \rightarrow \infty} A_{n}=a$ and $X_{n}=A_{n}-A$ for every $n \in \mathbf{N} .$
		Then it holds $\lim _{n \rightarrow \infty} X_{n}=\textbf{0}$ and
		$$
		\begin{aligned}
			C_{n} &=\frac{\left(X_{1}+A\right) B_{n}+\cdots+\left(X_{n}+A\right) B_{1}}{n} \\
			&=\frac{X_{1} B_{n}+\cdots+X_{n} B_{1}}{n}+A \cdot \frac{B_{n}+\cdots+B_{1}}{n}=F_{n}+G_{n} .
		\end{aligned}
		$$ Since $\lim _{n \rightarrow \infty} B_{n}=B$, the sequence $\left(B_{n}\right)_{n \in \mathbf{N}}$ is bounded, according to (a), it follows that
		$$
		\lim _{n \rightarrow \infty} F_{n}=\lim _{n \rightarrow \infty} \frac{X_{1} B_{n}+\cdots+X_{n} B_{1}}{n}=\textbf{0}
		$$
		while
		$$
		\lim _{n \rightarrow \infty} G_{n}=\lim _{n \rightarrow \infty} A \cdot \frac{B_{n}+\cdots+B_{1}}{n}=AB.
		$$
		Therefore $\lim _{n \rightarrow \infty} C_{n}=AB$. Then we have completed the proof of the corollary of matrix version.

		Back to the proof of \ref{sec_2}, we can set that $$B_{n-t}= \frac{1+2\sum_{j> t}^{n}\mA^{j-t}\otimes\mA^{j-t}}{n}=\frac{1+2\mA\otimes\mA+\dots+2\mA^{n-t}\otimes\mA^{n-t}}{n}.$$ and $$A_{t+1} = \mR_0(t)\otimes\mR_0(t).$$ Then, the matrix in trace operation on the right side of \eqref{pei} has the shape of \eqref{C4}. Moreover, using the Stolz Theorem and $\rho(\mA)<1,$ we have $\lim\limits_{n \to \infty} B_{n} = \lim\limits_{n \to \infty}\frac{1}{n}(1+2\sum_{j> 0}^{n}\mA^{j}\otimes\mA^{j})=\lim\limits_{n \to \infty}2\mA^{n}\otimes\mA^{n}=\textbf{0},$ and $\lim\limits_{n \to \infty}A_n = \lim\limits_{n \to \infty}\mR_0(n)\otimes\mR_0(n)=\mu^4(\mI-\mA^2)^{-1}\otimes(\mI-\mA^2)^{-1}$. Then according to the corollary of the Stolz theorem mentioned above, we have
		\begin{equation}\label{conclusion}
			\lim\limits_{n \to \infty}\textmd{Tr}\Big[\textmd{Cov}(\frac{1}{n}\sum_{t=1}^{n}\vy_{t}\vy_t^T)\Big]= 0.
		\end{equation}
		
	\subsection{Proof of Remark \ref{remark2}: the hidden nodes impact the upper bound of the proposed unbiased estimator's variance.}
	
	
	Considering the variance of the unbiased estimator, using \eqref{inequality} and \eqref{uns}, we obtain the upper bound of the variance as follows,
	\begin{align}\label{eqn1}
		\textmd{Tr}[\textmd{Cov}([\mA^u(t)]_S)]\leq \frac{2}{\mu^4}\textmd{Tr}\Big[\textmd{Cov}\Big([\vy_{t+1}\vy_{t}^T]_S\Big)\Big]+
		\frac{2}{\mu^4}\textmd{Tr}\Big[\textmd{Cov}\Big([\vy_{t+2}\vy_{t-1}^T]_S\Big)\Big].
	\end{align}
	
	Multiplying both sides of equation \eqref{stack} by
	$\vy_{t}^{T}$ and  considering the partial observability, we obtain
	\begin{align*}
		[\mathbf{y}_{t+1} \mathbf{y}_{t}^{T}]_S=[\mathbf{A} \mathbf{y}_{t} \mathbf{y}_{t}^{T}]_S+\mu [\mathbf{x}_{t+1} \mathbf{y}_{t}^{T}]_S.
	\end{align*}
	Then, according to \eqref{inequality}, we have
	\begin{align}\label{eqn2}
		\textmd{Tr}[\textmd{Cov}([\mathbf{y}_{t+1} \mathbf{y}_{t}^{T}]_S)]\leq 2\textmd{Tr}\Big[\textmd{Cov}\Big([\mA \vy_t\vy_t^T]_S\Big)\Big]+
		2\mu^2\textmd{Tr}\Big[\textmd{Cov}\Big([\vx_{t+1}\vy_t^T]_S\Big)\Big].
	\end{align}
	Similarly to \cite[eq. (62)]{matta2018consistent}, with rules for multiplication between partitioned matrices, we have
	\begin{align}\label{eqn3}
		[\mA \vy_t\vy_t^T]_S= [\mA]_S [\vy_t\vy_t^T]_S + [\mA]_{SS'} [\vy_t\vy_t^T]_{S'S},
	\end{align}
	where $S'$ is defined  as the set of hidden nodes. $[\mA]_{SS'}$ denotes the submatrix spanning the rows, indexed by the set $S$, and columns, indexed by the set $S'$ of the combination matrix $\mA$. According to \eqref{inequality}, we have
	\begin{equation}\label{3.6}
		\textmd{Tr}\Big[\textmd{Cov}\Big([\mA \vy_t\vy_t^T]_S\Big)\Big]\leq2\textmd{Tr}\Big[\textmd{Cov}\Big([\mA]_S [\vy_t\vy_t^T]_S\Big)\Big] + 2 \textmd{Tr}\Big[\textmd{Cov}\Big([\mA]_{SS'} [\vy_t\vy_t^T]_{S'S}\Big)\Big].
	\end{equation}
	Combing \eqref{eqn1}, \eqref{eqn2}, and \eqref{3.6}, we show that the upper bound of the variance of the proposed estimator is influenced by the hidden nodes and the missing observations.
	\end{document}